\newtheorem{thm}{Theorem}[section]
\newtheorem{cor}[thm]{Corollary}
\newtheorem{lem}[thm]{Lemma}
\theoremstyle{definition}
\newtheorem{defn}[thm]{Definition}
\theoremstyle{remark}
\newtheorem{rem}[thm]{Remark}
\numberwithin{equation}{section}
\newcommand{\E}[1]{\mathbb{E}\left[#1\right]}
\newcommand{\RR}{\mathbb{R}}
\newcommand{\PP}{\mathbb{P}}
\newcommand{\CC}{\mathbb{C}}
\newcommand{\cO}{\mathcal{O}}
\newcommand{\set}[1]{\left\{#1\right\}}
\newcommand{\ind}[1]{\bm{1}_{#1}}
\renewcommand{\Re}{\mathrm{Re}}
\newcommand{\pp}{\textup{\texttt{+}}}
\newcommand{\mm}{\textup{\texttt{-}}}
\DeclareMathOperator{\arsinh}{arsinh}
\begin{document}
\pagestyle{empty}

\title{W-shaped implied volatility curves in a variance-gamma mixture model}
\author{Martin Keller-Ressel}
\affil{TU Dresden, Institute for Mathematical Stochastics, Dresden, 01062, Germany}
\affil{martin.keller-ressel@tu-dresden.de}

\maketitle

\begin{abstract}
In liquid option markets, W-shaped implied volatility curves have occasionally be observed. We show that such shapes can be reproduced in a mixture of two variance-gamma models. This is in contrast to lognormal models, where at least three different distributions have to be mixed in order to produce a W-shape, as recently shown by Glasserman and Pirjol.
\end{abstract}


\section{Introduction}
W-shaped implied volatility curves have occasionally been observed in liquid option markets \cite{voladyn2022market} and have been linked to diverging investor expectations in anticipation of high-impact events, such as earning announcements. A comprehensive mathematical treatment of W-shaped volatility curves has recently been given by Glasserman and Pirjol in \cite{glasserman2021w}. Among other findings, they show that there is no simple link between bimodality of risk-neutral densities and W-shaped implied volatilities, and they study the appearance of W-shaped volatility curves in lognormal mixture models. One striking result \cite[Cor.~4.1]{glasserman2021w} is that in such a mixture model, mixing \emph{two} lognormal densities is not sufficient to produce a W-shaped volatility curve, and that at least \emph{three} such densities have to be mixed in order to produce a W-shape. This seems at odds with the explanation of `diverging investor beliefs', where just two (bearish and bullish) groups of investors should suffice to produce a W-shaped implied volatility curve. Here, we show that when the lognormal family of distributions is replaced by the variance-gamma family, a two-distribution mixture is sufficient to produce W-shapes in implied volatility curves. This links the study of W-shaped volatility curves to the literature on non-Gaussian and L\'evy-process-based option pricing models, as pioneered by \cite{madan1990variance, madan1998variance} and shows that such models can lead to more parsimonious explanations for unusual shapes than Gaussian models.

\section{Sufficent conditions for W-shaped smiles}

We fix a time horizon $T > 0$ and consider a price process $S$ with log-price $X_t = \log(S_t/S_0)$. We assume that $X_T$ has a continuous density $f$, satisfying the risk-neutrality condition $\int_{-\infty}^\infty e^x f(x) dx = 1$. The corresponding density $g$ of $S_T$ is given by $g(s) = \tfrac{1}{s}f(\log(s/S_0))$,
and to $g$ we associate the call-price function
\begin{equation}\label{eq:call}
C(K) = \int_K^\infty (s - K) g(s) ds.
\end{equation}
The density $g$ can be recovered from $C$ by differentiating twice, i.e., we have
\begin{equation}\label{eq:density}
C''(s) = g(s) \qquad \text{for all $s \in (0,\infty)$.}
\end{equation}
Note that the continuity of $g$ has allowed us to strengthen this equality from holding `almost everywhere' to `everywhere'.
Denoting by $C_{BS}(K,\sigma)$ the Black-Scholes price of a call with time-to-maturity $T$ and initial stock price $S_0$, the implied volatility $\sigma(K)$ associated to $g$ (or equivalently to $f$) is given as the unique solution of 
\begin{equation}\label{eq:IV}
C(K) = C_{BS}(K,\sigma(K)).
\end{equation}
By the inverse function theorem, $\sigma$ is a smooth function of $C$, and hence at least twice continuously differentiable. We denote by 
\[\varphi_\sigma(x) = \frac{1}{\sqrt{2\pi}\sigma} \exp\left(-\tfrac{1}{2}\left(\tfrac{x}{\sigma} + \tfrac{\sigma}{2}\right)^2\right) \]
the density of a normal distribution with variance $\sigma^2$ and mean $-\sigma^2/2$. This density is the risk-neutral log-price density of the Black-Scholes model with volatility $\sigma$.

\begin{defn}
\begin{enumerate}[(a)]
\item The implied volatility smile $K \mapsto \sigma(K)$ is called \textbf{W-shaped}, if there exists $\sigma_* \in (0,\infty)$, such that
\[K \mapsto \sigma(K) - \sigma*\]
has exactly four sign changes with sign sequence $\pp\mm\pp\mm\pp$; cf. \cite[Def.~2.1]{glasserman2021w}.
\item The implied volatility smile $K \mapsto \sigma(K)$ is called \textbf{(W+)-shaped}, if there exists $\sigma_* \in (0,\infty)$, such that 
\[K \mapsto \sigma(K) - \sigma*\]
has as an even number $d$ of sign changes with $d \ge 4$ and with sign sequence starting and terminating with $\pp$.
\end{enumerate}
\end{defn}
We remark that a `sign change' always refers to a 'strong sign change', i.e. a change from a strictly positive sign to a strictly negative sign and vice versa.\\ 

We now give sufficient conditions for a (W+)-shaped smile, which -- up to condition (b) -- have been implicitly considered, without giving them names, already in \cite[Sec.~6.4]{glasserman2021w}.

\begin{lem}\label{prop:sufficient}
Suppose that the continuous log-price density $f$ satisfies the following three conditions:
\begin{enumerate}[(a)]
\item \textbf{geometric symmetry:}
\begin{equation}\label{eq:geo_sym}
e^{x/2} f(x) = e^{-x/2} f(-x), \qquad \forall\,x \in \RR.
\end{equation}
\item \textbf{semi-heavy tails:} 
\begin{equation}\label{eq:semi_heavy}
r_* := \sup \set{r: \int_{-\infty}^\infty e^{rx} f(x)dx < \infty}  < \infty.
\end{equation}
\item \textbf{dip-at-zero:}
\begin{equation}\label{eq:dip}
f(0) < \varphi_{\sigma(S_0)}(0) = \frac{1}{\sqrt{2 \pi} \sigma(S_0)}e^{-\sigma(S_0)^2 / 8}.
\end{equation}
\end{enumerate}
Then the associated implied volatility smile $K \mapsto \sigma(K)$ is (W+)-shaped.
\end{lem}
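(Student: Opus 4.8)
The plan is to reduce the claim to the qualitative shape of the function $\psi(k) := \sigma(S_0 e^k)$ of the log-moneyness $k$, and then to exhibit the threshold $\sigma_*$ explicitly. First I would show that the geometric symmetry \eqref{eq:geo_sym} forces $\psi$ to be even. The mechanism is put--call symmetry: substituting $s = S_0 e^y$ in \eqref{eq:call} and using the consequence $f(-y) = e^y f(y)$ of \eqref{eq:geo_sym}, one checks that the put price at strike $S_0 e^{-k}$ equals $e^{-k} C(S_0 e^k)$; since $\varphi_\sigma$ obeys the same symmetry, the identical relation holds for $C_{BS}(\cdot,\sigma)$ at every fixed $\sigma$, and combining the two identities with put--call parity (with forward $S_0$ by risk-neutrality) yields $C(S_0 e^{-k}) = C_{BS}\big(S_0 e^{-k},\,\sigma(S_0 e^k)\big)$, whence $\psi(-k)=\psi(k)$ by uniqueness of implied volatility \eqref{eq:IV}. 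Being even and, by the discussion preceding the lemma, twice continuously differentiable, $\psi$ satisfies $\psi'(0)=0$, and the sign changes of $\psi-\sigma_*$ are symmetric about $k=0$ for every $\sigma_*$.

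Next I would extract a strict local maximum of $\psi$ at the money from the dip-at-zero condition. Writing $\sigma_0 := \sigma(S_0)$ and differentiating the identity $C(K) = C_{BS}(K,\sigma(K))$ twice at $K=S_0$, the cross terms drop out because $\sigma'(S_0)=0$, leaving $C''(S_0) = \partial_{KK}C_{BS}(S_0,\sigma_0) + \BSVega(S_0,\sigma_0)\,\sigma''(S_0)$. Here $C''(S_0)=f(0)/S_0$ by \eqref{eq:density}, while $\partial_{KK}C_{BS}(S_0,\sigma_0) = \varphi_{\sigma_0}(0)/S_0$; since $\BSVega > 0$, the strict inequality \eqref{eq:dip} gives $\sigma''(S_0) < 0$, hence $\psi''(0) < 0$. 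Thus $k=0$ is a strict local maximum of $\psi$, and in particular $\psi(k) < \psi(0)$ for all small $k>0$.

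Then I would use the semi-heavy-tail condition to show $\psi(k)\to\infty$ as $k\to\pm\infty$. Because $r_* < \infty$, the tails of $f$ are heavier than those of any Gaussian: from $C(S_0 e^k) = S_0\int_k^\infty \PP(X_T>y)\,e^y\,dy$ and the fact that $\PP(X_T>y)$ cannot decay faster than $e^{-ry}$ for any $r>r_*$, the call prices $C(K)$ decay only sub-exponentially in $\log K$, whereas $C_{BS}(K,\tilde\sigma)$ decays faster than any power of $K$; hence $\sigma(K)$ eventually exceeds any prescribed level, and since $\psi$ is even the same holds as $k\to-\infty$. This is the one step that needs care: $r_* < \infty$ by itself yields only $\limsup_{|k|\to\infty}\psi(k)=\infty$ (a form of Lee's moment formula), and promoting this to a genuine limit — equivalently, ruling out wild tail oscillations of $f$ — uses a modest amount of tail regularity that is automatic for the variance-gamma family studied below and is implicit in the usual notion of semi-heavy tails; this is the main obstacle.

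Finally I would assemble the pieces. Since $\psi$ is even, has a strict local maximum at $0$, and tends to $+\infty$ at $\pm\infty$, it attains a local minimum at some $k_0>0$ with $\psi(k_0) < \psi(0)$. Choosing any $\sigma_* \in \big(\psi(k_0),\,\psi(0)\big)$, the function $\psi-\sigma_*$ is strictly positive near $k=0$, strictly negative at $k_0$, and strictly positive for $|k|$ large, so on $(0,\infty)$ it has an even number $\ge 2$ of sign changes; by the evenness of $\psi$ its restriction to $(-\infty,0)$ contributes a further even number $\ge 2$, and there is no sign change at $k=0$ since $\psi(0) > \sigma_*$. Hence $K\mapsto\sigma(K)-\sigma_*$ has an even number $d\ge 4$ of sign changes with a sign sequence starting and terminating with $\pp$, i.e. the smile is (W+)-shaped.
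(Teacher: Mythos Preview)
Your argument follows the paper's proof essentially step for step: symmetry yields $\sigma'(S_0)=0$, the second-derivative identity combined with the dip condition gives $\sigma''(S_0)<0$ and hence a strict local maximum at the money, the semi-heavy-tail condition forces the smile to blow up in the wings, and these pieces are assembled into at least four sign changes with the correct $\pp\mm\cdots\mm\pp$ pattern. Your treatment is in fact slightly more careful than the paper's in two places---you track the $1/S_0$ factors in $C''(S_0)$ and the Black--Scholes Gamma (which cancel), and you honestly flag that Lee's moment formula alone yields only a $\limsup$, a subtlety the paper dispatches by citing \cite{lee2004moment,benaim2008smile}.
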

A proof can essentially be distilled from \cite[Sec.~6.4]{glasserman2021w}. We first discuss the conditions in more detail, and then give a self-contained proof of the result.
\begin{itemize}
\item By \cite{carr2009put} the first condition is equivalent to the symmetry of the implied volatility smile in the sense that
\begin{equation}\label{eq:sym}
\sigma(K) = \sigma(S_0^2 / K), \qquad \forall\,K \in (0,\infty).
\end{equation}
It also automatically implies the risk-neutrality condition $\int e^x f(x) dx = 1$.
\item By Lee's moment formula (see \cite{lee2004moment, benaim2008smile}), the second condition implies that $\lim_{K \to \infty} \sigma(K) = \infty$. Together with condition (a) it follows that also $\lim_{K \to 0} \sigma(K) = \infty$
\item The last condition, together with (a), implies concavity of the ATM implied volatility, i.e., $\sigma''(S_0) <0$, as shown below and in \cite[Sec.~6.4]{glasserman2021w}, who give a similar condition in the context of log-normal mixture models. Note that the density $f$ appears on the left side of \eqref{eq:dip}, but also indirectly on the right side, since the implied volatility $\sigma(S_0)$ also depends on $f$. 
\end{itemize}

\begin{proof}
By the symmetry property \eqref{eq:sym} it follows that $\sigma'(S_0) = 0$. Taking second derivatives of \eqref{eq:IV} and evaluating at $K = S_0$, we obtain
\[C''(S_0) = \partial_{KK}C_{BS}(S_0, \sigma(S_0)) + \partial_{\sigma}C_{BS}(S_0, \sigma(S_0)) \sigma''(S_0).\]
Rearranging yields
\[\sigma''(S_0) = \frac{C''(S_0) -  \partial_{KK}C_{BS}(S_0, \sigma(S_0))}{\partial_{\sigma}C_{BS}(S_0, \sigma(S_0))},\]
see also \cite[Eq.~(79)]{glasserman2021w}. In the denominator, we find the option's Vega, wich is always strictly positive. The first term in the numerator is equal to $f(0)$, by
\eqref{eq:density}. The second term is the option's Gamma, given by 
\[\partial_{KK}C_{BS}(S_0, \sigma(S_0)) = \phi_{\sigma(S_0)}(0) = \frac{1}{\sqrt{2 \pi} \sigma(S_0)}e^{-\sigma(S_0)^2 / 8}. \]
Thus, the dip-at-zero condition \eqref{eq:dip} implies that $\sigma''(S_0) < 0$. Together with $\sigma'(S_0) = 0$ we conclude that $K \mapsto \sigma(K)$ has a local maximum at $K = S_0$. By continuity, there exists $\sigma_* < \sigma(S_0)$, such that $K \mapsto \sigma(K) - \sigma_*$ has (at least) two sign changes, one to the left of $S_0$ and one to the right of $S_0$. Going from $S_0$ to the right the first sign change must be from $\pp$ to $\mm$. But $\lim_{K \to \infty} \sigma(K) = \infty$, such that there must be (at least) one other sign change from $\mm$ to $\pp$; the terminal sign being $\pp$. We conclude that there are at least two sign changes to the right of $S_0$, ending in $\pp$. By symmetry, the same holds to the left of $S_0$, and we conclude that the smile must be (W+)-shaped.
\end{proof}

\section{A variance-gamma mixture model}
We introduce a mixture of two variance-gamma processes, which -- for suitable choice of parameters -- satisfies the conditions of Proposition~\ref{prop:sufficient} and hence produces a (W+)-shaped implied volatility curve at time $T >0$. With some additional effort we will then show that the curve is actually W-shaped. Variance-gamma processes belong to the best-known non-Gaussian models for option pricing, see e.g. \cite{madan1990variance, madan1998variance}.
\subsection{The two components}
We fix positive parameters $v, c, \lambda, \mu$ such that $\mu < 2\lambda$. Our starting point are the two drifted Brownian motions\footnote{For the following construction it does not matter whether the same or two independent Brownian motions are used for $Y^+$ and $Y^-$.}
\begin{align}
Y^\pm_t &= v B_t - \tfrac{v^2}{2}t \pm \mu t.
\end{align}
The two different drift rates $\pm \mu$ can be interpreted as the diverging beliefs of two groups of investors on the future stock price evolution.
As stochastic time-changes we use two Gamma processes $L^\pm$ with shape parameter $c$ and rate parameters $\lambda \pm \tfrac{\mu}{2} > 0$, cf. \cite[Ch.~4]{tankov2003financial}. The differing rate parameters will allow us to match the geometric symmetry condition in the mixture model. In terms of marginal distributions, we have 
\[L^\pm_t \sim \Gamma(ct,\lambda \pm \tfrac{\mu}{2}),\]
where $\Gamma(a,b)$ denotes the Gamma distribution in shape-rate parameterization. The characteristic exponents $\ell_\pm(u) = \tfrac{1}{t} \log \E{e^{uL^\pm_t}}$  of the processes $L^\pm$ (equivalently, the cumulant generating function of $L_1^\pm$) are given by
\[\ell_\pm(u) = - c \log\left(1 - \tfrac{u}{\lambda \pm \mu/2}\right), \qquad u \in (-\infty, \lambda \pm \mu/2).\]
Note that the processes $L^\pm$ can also be derived from a single $\Gamma(c,\lambda)$-process by applying an exponential tilt of $\mp \tfrac{\mu}{2}$ respectively, i.e., we can write 
\begin{equation}\label{eq:tilt}
\ell_\pm(u) = \ell(u \mp \tfrac{\mu}{2}) - \ell(\mp \tfrac{\mu}{2}) \qquad \text{where} \qquad \ell(u) = -c \log\left(1 - \tfrac{u}{\lambda}\right).
\end{equation}
Using $L^\pm$ as time-changes applied to $Y^\pm$ we arrive at the variance-gamma processes
\begin{equation}\label{eq:VG}
X_t^\pm = Y^\pm_{L^\pm_t} = v B_{L_t^\pm} + \left(\pm \mu - \tfrac{v^2}{2}\right) L^\pm_t,
\end{equation}
with characteristic exponents
\begin{equation}\label{eq:psi}
\psi_\pm(u) = \ell_{\pm}\left(\tfrac{v^2}{2}(u^2 - u) \pm \mu u\right).
\end{equation}
These characteristic exponents can be transformed into the standard $(\sigma, \theta, \kappa)$-parameterization of the variance-gamma process, cf. \cite[Ch.~4]{tankov2003financial}
\[\psi_{VG}(u) = -\frac{1}{\kappa} \log\left(1 - \tfrac{u^2 \sigma^2 \kappa}{2} - \theta \kappa u \right),\]
by setting 
\[\sigma^2 = c\frac{v^2}{\lambda \pm \mu/2}, \qquad \theta = c\frac{-v^2/2 \pm \mu}{\lambda \pm \mu/2}, \qquad \kappa = 1/c.\]
Furthermore, the densities $f_\pm$ of $X_t^\pm$ are given by 
\begin{equation}\label{eq:VG_density}
f_\pm(x) = \gamma_\pm \left|\tfrac{x}{\alpha}\right|^{ct - 1/2} e^{\beta_\pm x} K_{ct - 1/2}(\alpha |x|), \qquad x \in \RR
\end{equation}
where $K_\nu(x)$ denotes the modified Bessel function of the second kind, and
\begin{equation}\label{eq:VG_par}
\alpha = \sqrt{\tfrac{\mu^2}{v^4} + \tfrac{2\lambda}{v^2} + \tfrac{1}{4}}, \qquad \beta_\pm = \pm \tfrac{\mu}{v^2} - \tfrac{1}{2} , \qquad \gamma_\pm = \sqrt{\tfrac{2}{\pi}} \tfrac{1}{\Gamma(ct)} \left(\tfrac{\lambda \pm \mu/2}{v^2}\right)^{ct}.
\end{equation}
This gives the alternative $(\alpha, \beta, c)$-parameterization of the variance-gamma distribution, which can be considered a special case of the $(\alpha, \beta, \delta, \lambda, \mu)$-parameterization of the generalized hyperbolic distribution, where $\lambda = c$ and $\delta = \mu = 0$, see \cite[Eq.~(4.38)]{tankov2003financial} and \cite{scott2018variance}.

\subsection{Mixing the components}
Our next goal is to mix $X_T^\pm$ such that the geometric symmetry condition \eqref{eq:geo_sym} is satisfied. To this end, we fix $T > 0$ and set

\begin{equation}\label{eq:ab}
a = \exp(-T \ell(\tfrac{\mu}{2})) = \left(1 + \tfrac{\mu}{2\lambda}\right)^{cT}, \qquad b = \exp(-T\ell(-\tfrac{\mu}{2})) = \left(1 - \tfrac{\mu}{2\lambda}\right)^{cT}.
\end{equation}
Let now $M$ be a Bernoulli random variable, independent of $X^\pm$, taking the value $1$ with probability 
\[p = \PP(M = 1) = \tfrac{a}{a+b} = \left(\tfrac{\lambda}{\mu} + \tfrac{1}{2}\right)^{cT} \Big/ \left\{\left(\tfrac{\lambda}{\mu} + \tfrac{1}{2}\right)^{cT} + \left(\tfrac{\lambda}{\mu} - \tfrac{1}{2}\right)^{cT}\right\}\]
and the value $0$ with complementary probability $1-p = b/(a+b)$. 
The log-price $X_T = \log(S_T/S_0)$ of our variance-gamma mixture model, is defined as the mixture
\begin{equation}\label{eq:mixture}
X_T = M\cdot X^-_T + (1 - M) X^+_T.
\end{equation}
\begin{lem}\label{lem:mixture}The log-price $X_T$ in the variance-gamma mixture model \eqref{eq:mixture} has the continuous density
\begin{equation}\label{eq:mix_dens}
f(x) = \tfrac{a}{a+b} f_-(x) + \tfrac{b}{a+b} f_+(x),
\end{equation}
with $f_\pm$ given by \eqref{eq:VG_density}. The density $f$ satisfies the geometric symmetry condition \eqref{eq:geo_sym} and the semi-heavy tail condition \eqref{eq:semi_heavy}. Moreover, the moment generating function of $X_T$ is 
\[m(u) = \tfrac{ab}{a+b}\left\{\exp\left(T \ell\left(\tfrac{v^2}{2}(u^2 - u) - \mu u + \tfrac{\mu}{2}\right)\right) + \exp\left(T \ell\left(\tfrac{v^2}{2}(u^2 - u) + \mu u - \tfrac{\mu}{2}\right)\right)\right\}.\]

\end{lem}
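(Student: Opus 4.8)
The lemma follows by conditioning on the mixing variable $M$; the crux is that the weight $p=a/(a+b)$ is tuned so that the exponential–tilting normalisers hidden inside $\psi_\pm$ are absorbed symmetrically. The plan is: (i) read off the density and its continuity; (ii) compute the moment generating function $m$ via \eqref{eq:psi} and the tilt identity \eqref{eq:tilt}; (iii) deduce geometric symmetry from the $u\leftrightarrow 1-u$ invariance of $m$; (iv) deduce the semi-heavy tail property from the explicit domain of finiteness of $m$. For (i): since $M$ is Bernoulli with $\PP(M=1)=a/(a+b)$ and independent of $(X^+_T,X^-_T)$, the tower property shows that the law of $X_T$ is the mixture of the laws of $X^-_T$ and $X^+_T$ with weights $\tfrac{a}{a+b}$ and $\tfrac{b}{a+b}$; this is \eqref{eq:mix_dens} with $f_\pm$ as in \eqref{eq:VG_density}. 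Continuity of $f$ reduces to continuity of each $f_\pm$: for $x\neq 0$ it is immediate from \eqref{eq:VG_density}, and at $x=0$ it follows from the small–argument expansion of $K_\nu$, under which $|x/\alpha|^{cT-1/2}K_{cT-1/2}(\alpha|x|)$ tends to a finite positive limit.

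For (ii): conditioning on $M$ gives $m(u)=\EE[e^{uX_T}]=\tfrac{a}{a+b}\,e^{T\psi_-(u)}+\tfrac{b}{a+b}\,e^{T\psi_+(u)}$ with $\psi_\pm$ as in \eqref{eq:psi}. Writing $w_\pm(u):=\tfrac{v^2}{2}(u^2-u)\pm\mu u$ and using \eqref{eq:tilt},
\[
\psi_\pm(u)=\ell_\pm\!\big(w_\pm(u)\big)=\ell\!\big(w_\pm(u)\mp\tfrac{\mu}{2}\big)-\ell\!\big(\mp\tfrac{\mu}{2}\big),
\]
so $e^{T\psi_-(u)}$ carries the prefactor $(1-\tfrac{\mu}{2\lambda})^{cT}=b$ and $e^{T\psi_+(u)}$ carries $(1+\tfrac{\mu}{2\lambda})^{cT}=a$, in the notation of \eqref{eq:ab}. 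Since $\tfrac{a}{a+b}\,b=\tfrac{b}{a+b}\,a=\tfrac{ab}{a+b}$, both weighted terms acquire the coefficient $\tfrac{ab}{a+b}$, and substituting $w_-(u)+\tfrac{\mu}{2}=\tfrac{v^2}{2}(u^2-u)-\mu u+\tfrac{\mu}{2}$ and $w_+(u)-\tfrac{\mu}{2}=\tfrac{v^2}{2}(u^2-u)+\mu u-\tfrac{\mu}{2}$ yields exactly the stated formula for $m$. I expect this bookkeeping around \eqref{eq:tilt} and the definition of $p$ to be the bulk of the write-up.

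For (iii): in the formula for $m$ the substitution $u\mapsto 1-u$ fixes $u^2-u$ and sends $-\mu u+\tfrac{\mu}{2}\mapsto\mu u-\tfrac{\mu}{2}$, hence interchanges the two summands, so $m(u)=m(1-u)$ wherever $m$ is finite. Because $\mu<2\lambda$, the set $\{u:m(u)<\infty\}$ is an open interval containing $[0,1]$, so $m$ is analytic there and determines $f$; writing $u=w+\tfrac12$, the identity $m(u)=m(1-u)$ says the bilateral Laplace transform of $x\mapsto e^{x/2}f(x)$ is even, which by uniqueness and continuity of $f$ is precisely \eqref{eq:geo_sym}. (Alternatively \eqref{eq:geo_sym} can be checked directly from \eqref{eq:VG_density}, where it reduces to $p\,\gamma_-=(1-p)\,\gamma_+$ at $t=T$, which holds by \eqref{eq:VG_par} and \eqref{eq:ab}.)

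For (iv): $\int_{-\infty}^{\infty}e^{rx}f(x)\,dx=m(r)$, which by the formula for $m$ is finite iff $w_\pm(r)\mp\tfrac{\mu}{2}<\lambda$ for both signs; since $w_\pm(r)=\tfrac{v^2}{2}r^2+O(r)\to\infty$ as $r\to\infty$ (using $v>0$), this fails for all large $r$, so $r_*<\infty$ — in fact $r_*>1$ — which is \eqref{eq:semi_heavy}. The one genuinely non-routine point in all of this is the behaviour of $f_\pm$ at the origin; everything else is algebra around the tilt identity and the choice of mixing weight.
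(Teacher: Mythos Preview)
Your proof is correct and follows essentially the same route as the paper: the density and moment generating function are obtained by conditioning on $M$ and invoking the tilt identity \eqref{eq:tilt}, geometric symmetry is argued via $m(u)=m(1-u)$ (the paper presents the direct density identity \eqref{eq:ab_sym} as primary and the MGF symmetry as the alternative, whereas you reverse this order), and the semi-heavy tail condition comes from the blow-up of $m$. One small caveat: your claim that $|x/\alpha|^{cT-1/2}K_{cT-1/2}(\alpha|x|)$ tends to a finite limit at $x=0$ tacitly uses $cT>\tfrac12$; the paper does not address continuity at the origin explicitly, but the standing assumption $cT>1$ used elsewhere covers this.
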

\begin{proof}
The form of the density follows from the definition \eqref{eq:mixture} of the mixture model. The form of the moment generating function follows by combining \eqref{eq:tilt} with \eqref{eq:psi}. To verify the geometric symmetry condition, combining \eqref{eq:VG_density} with \eqref{eq:ab} yields the relations
\begin{equation}\label{eq:ab_sym}
\begin{split}
b e^{x/2} f_+(x) &= a e^{-x/2} f_-(-x)\\
a e^{x/2} f_-(x) &= b e^{-x/2} f_+(-x).
\end{split}
\end{equation}
Adding both equations and dividing by $a+b$ gives \eqref{eq:geo_sym}. Alternatively, note that the moment generating function satisfies $m(u) = m(1-u)$ which is equivalent to the geometric symmetry condition; cf. \cite{carr2009put}. For the semi-heavy tail condition, note that $\lim_{u \to \pm \infty} m(u) = \infty$, by the properties of $\ell$, see \eqref{eq:tilt}.
\end{proof}
To apply Proposition~\ref{prop:sufficient} to the variance-gamma mixture model it remains to show that also the dip-at-zero condition \eqref{eq:dip} can be satisfied. For this, analyzing the limiting case $v \to 0$ will be crucial. 

\begin{center}
\includegraphics[width=0.65\textwidth]{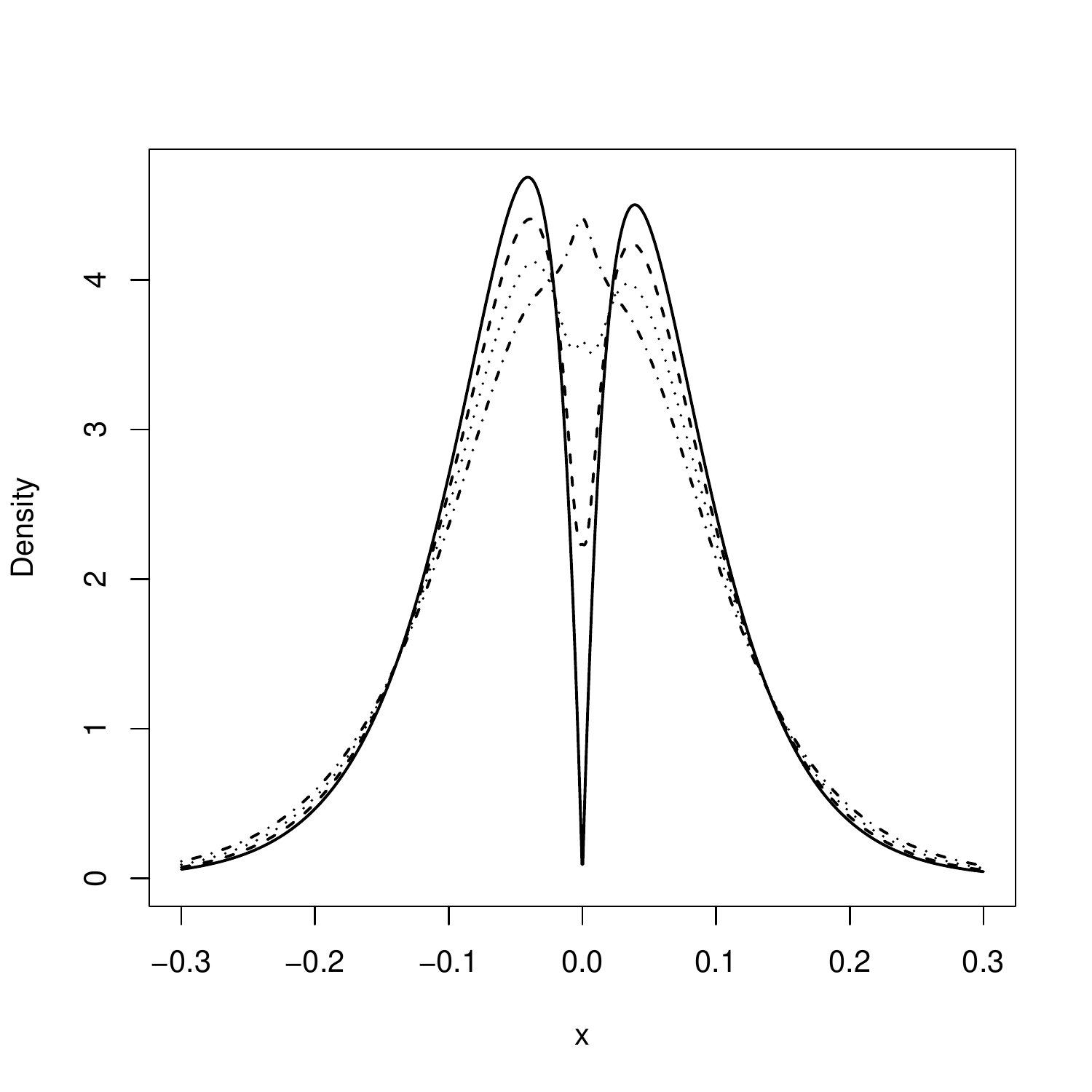}
\captionof{figure}{\textbf{Densities of the variance-gamma mixture model.} This plot shows the densities $f_v$ of the variance-gamma mixture model (see Lemma~\ref{lem:mixture}) with parameters $cT = 2$, $\lambda = 0.5$, $\mu = 0.02$ and $v = (0, 0.01, 0.015, 0.02)$. The solid line corresponds to the case $v = 0$, where $f_0$ becomes a double gamma density.}
\label{fig:density}
\end{center}

\subsection{The limit $v \to 0$}
We indicate the dependency of various quantities on $v$ by writing $f_v(x), m_v(u), X_{T,v}$, etc. To see what happens as $v \to 0$, consider the time-change equation \eqref{eq:VG}. As $v \to 0$ the Brownian motion (and the drift $-\tfrac{v^2}{2}$) vanish completely, and we are left with
\begin{equation}\label{eq:X0}
X_{t,0}^\pm := \lim_{v \to 0} X_{t,v}^\pm  = \pm \mu L_t^\pm\qquad a.s.
\end{equation}
By the scaling law of the Gamma distribution, this means that
\[X_{t,0}^+ \sim \Gamma\left(ct,\tfrac{\lambda}{\mu} + \tfrac{1}{2}\right)\quad \text{and} \quad -X_{t,0}^- \sim \Gamma\left(ct,\tfrac{\lambda}{\mu} - \tfrac{1}{2}\right).\]
Hence, $X_{t,0}$ becomes a mixture of an ordinary Gamma distribution (supported on $(0,\infty)$) and a reflected Gamma distribution (supported on $(-\infty,0)$). Distributions of this type have occasionally been studied under the name of double gamma distributions, cf. \cite{rao2012estimation}. Note, however, that even the almost sure convergence of random variables in \eqref{eq:X0} does not guarantee the pointwise convergence of densities, which is of interest here, in light of condition \eqref{eq:dip}. 

\begin{lem}\flushleft \begin{enumerate}[(a)]
\item The pointwise convergence $\lim_{v \to 0} f_v(x) = f_0(x)$ holds for all $x \in \RR$, where $f_0$ is an asymmetric double gamma density, given by
\begin{equation}\label{eq:double_gamma}
f_0(x) = \frac{a}{a+b} \frac{\lambda_-^{cT}}{\Gamma(cT)} (-x)^{cT-1} e^{\lambda_- x} \ind{(-\infty,0)}(x)  + \frac{b}{a+b} \frac{\lambda_+^{cT}}{\Gamma(cT)} x^{cT-1} e^{-\lambda_+ x} \ind{(0,\infty)}(x)
\end{equation}
with $\lambda_\pm := \lambda/\mu \pm 1/2$. 
\item Let $cT > 1$. Then $f_0$ is continuous, in particular at $x = 0$, where $f_0(0) = 0$. Moreover, $f_0$ is bimodal with modes at $-(cT - 1)/\lambda_- < 0$ and $(cT-1)/\lambda_+ > 0$.  
\end{enumerate}
\end{lem}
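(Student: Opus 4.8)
The plan is to prove (a) directly from the closed form \eqref{eq:VG_density} using the large-argument asymptotics of the modified Bessel function, $K_\nu(z)=\sqrt{\pi/(2z)}\,e^{-z}(1+O(z^{-1}))$ as $z\to\infty$. First I would record, from $\alpha^2=\mu^2/v^4+2\lambda/v^2+1/4$ in \eqref{eq:VG_par}, that $v^2\alpha\to\mu$ and in fact $\alpha=\mu/v^2+\lambda/\mu+o(1)$ as $v\to0$, so that $\alpha|x|\to\infty$ for every fixed $x\neq0$. Substituting the Bessel asymptotics into \eqref{eq:VG_density} and collecting the powers of $\alpha$ and $|x|$, the prefactor $\gamma_\pm(|x|/\alpha)^{cT-1/2}\sqrt{\pi/(2\alpha|x|)}$ collapses to $\frac{1}{\Gamma(cT)}\left(\frac{\lambda\pm\mu/2}{v^2\alpha}\right)^{cT}|x|^{cT-1}$, which converges to $\frac{\lambda_\pm^{cT}}{\Gamma(cT)}|x|^{cT-1}$ since $v^2\alpha\to\mu$; and the exponent equals $\beta_\pm x-\alpha|x|$. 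The crucial step is to track the two $O(v^{-2})$ terms inside this exponent: for $x>0$ one has $\beta_+-\alpha\to-\lambda_+$ while $\beta_--\alpha\to-\infty$, and for $x<0$ one has $\beta_-+\alpha\to\lambda_-$ while $\beta_++\alpha\to+\infty$. Hence $f_+(x)\to\frac{\lambda_+^{cT}}{\Gamma(cT)}x^{cT-1}e^{-\lambda_+x}$ and $f_-(x)\to0$ for $x>0$, and symmetrically for $x<0$; since the mixing weights in \eqref{eq:mix_dens} do not depend on $v$, this yields $f_v(x)\to f_0(x)$. At $x=0$ a separate computation is needed: with the small-argument expansion $K_\nu(z)\sim\tfrac12\Gamma(\nu)(2/z)^\nu$ one finds that $f_v(0)$ is of order $v^{2cT-2}$, which tends to $0=f_0(0)$ precisely when $cT>1$ — the regime that is relevant in (b) and in the sequel.

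As an alternative route to (a), one can use the normal variance–mean mixture representation implicit in \eqref{eq:VG}, namely $X^\pm_{T}\mid L^\pm_T=\ell\sim\mathcal N((\pm\mu-\tfrac{v^2}{2})\ell,\,v^2\ell)$: as $v\to0$ the Gaussian kernel acts as an approximate identity and $f_v$ converges to the image of the $\Gamma(cT,\lambda\pm\mu/2)$ law under $\ell\mapsto\pm\mu\ell$, which by the Gamma scaling law is exactly the double-gamma density \eqref{eq:double_gamma}; the approximate-identity step is routine away from $x=0$ and again delicate at the origin.

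For (b), assume $cT>1$. On $(0,\infty)$ the density $f_0$ is a positive multiple of the $\Gamma(cT,\lambda_+)$ density, and on $(-\infty,0)$ a positive multiple of the reflected $\Gamma(cT,\lambda_-)$ density, hence smooth and strictly positive on each open half-line. Continuity at $0$ follows because $cT-1>0$ forces $(-x)^{cT-1}\to0$ as $x\to0^-$ and $x^{cT-1}\to0$ as $x\to0^+$, matching $f_0(0)=0$. For the modes I would differentiate $\log f_0$ on each half-line: on $(0,\infty)$, $\frac{d}{dx}\log f_0(x)=\frac{cT-1}{x}-\lambda_+$ vanishes only at $x=(cT-1)/\lambda_+$, and $\log f_0$ is strictly concave there (second derivative $-(cT-1)/x^2<0$), so $f_0$ has a unique local maximum on $(0,\infty)$; symmetrically, a unique local maximum at $x=-(cT-1)/\lambda_-$ on $(-\infty,0)$. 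Since $f_0(0)=0$, $f_0>0$ off the origin, and $f_0\to0$ at $\pm\infty$, these are the only interior critical points and $x=0$ is the intervening local minimum; thus $f_0$ is bimodal with the asserted modes.

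The main obstacle I anticipate is the bookkeeping in (a): one has to keep two competing $O(v^{-2})$ contributions in the exponent $\beta_\pm x-\alpha|x|$ and verify that the subleading constant $\lambda/\mu$ in $\alpha=\mu/v^2+\lambda/\mu+o(1)$ is exactly what converts them into the decay rates $\lambda_\pm$, together with the separate and genuinely $cT$-dependent behaviour at $x=0$. Part (b) is then elementary once $f_0$ is available in closed form.
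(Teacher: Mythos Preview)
Your proposal is correct and follows the paper's approach: large-argument Bessel asymptotics combined with the limits $\alpha\mp\beta_\pm\to\lambda_\pm$ (and $\alpha\pm\beta_\pm\to+\infty$) for part (a), then elementary Gamma-density calculus for part (b). You are in fact more careful than the paper at $x=0$, where the paper writes only that the limit ``can be shown by separate calculation''; your use of the small-argument expansion of $K_\nu$ and the resulting $f_v(0)=O(v^{2cT-2})$ (hence the $cT>1$ caveat) are details the paper omits, and the alternative variance--mean-mixture route you sketch is not pursued there.
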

\begin{rem}
\begin{enumerate}[(a)]
\item We remark that the (symmetric) double gamma distribution appears in \cite{balanda1988kurtosis} as an example of a bimodal and `peak-less' distribution with large kurtosis, refuting an earlier claim made by Darlington \cite{darlington1970kurtosis} that bimodal distributions must have small kurtosis. The density is shown in Figure~\ref{fig:density}.
\item A stronger convergence result for the (analytically extended) densities $f_v$ is shown in Lemma~\ref{lem:analytic}.
\end{enumerate}
\end{rem}
\begin{proof}
We start by showing the convergence of $f_{+,v}(x)$ to the second term of \eqref{eq:double_gamma}. By \cite[9.7.2]{abramowitz1964handbook} the modified Bessel function of the second kind has the asymptotic behavior 
\[e^{-z}K_\nu(z) = \sqrt{\tfrac{\pi}{2z}}\left(1 + \mathcal{O}(1/z)\right)\]
as $z \to \infty$. Together with 
\begin{equation}\label{eq:alpha_beta}
\begin{split}
\lim_{v \to \infty} (\alpha_{v} \mp \beta_{\pm,v}) &= \lim_{v \to \infty} \sqrt{\tfrac{\mu^2}{v^4} + \tfrac{2\lambda}{v^2} + \tfrac{1}{4}} \mp \left(\pm \tfrac{\mu}{v^2} - \tfrac{1}{2}\right) = \frac{\lambda}{\mu} \pm \frac{1}{2} = \lambda_\pm,\\
\lim_{v \to \infty} (\alpha_{v} \pm \beta_{\pm,v}) &= +\infty
\end{split}
\end{equation}
we obtain
\[f_{0,+}(x) := \lim_{v \to 0} f_{+,v}(x) = \frac{1}{\Gamma(cT)} \lambda_+^{cT} x^{cT - 1} \exp\left(- \lambda_+ x \right), \qquad \text{for $x > 0$},\]
and $f_{0,+}(x) :=\lim_{v \to 0} f_+(x) = 0$ for $x < 0$. For $x = 0$, it can be shown by separate calculation that $\lim_{v \to 0} f_+(0) = 0$. The calculations for $f_{-,v}$ are similar and therefore omitted.\\
For part (b), note that 
\[\lim_{x \downarrow 0} f_{0,+}(x) = \begin{cases}+\infty &\quad cT < 1\\ \lambda_+ &\quad cT = 1\\0 &\quad cT > 1\end{cases},\]
and similarly for $f_{0,-}$ when approaching from the left. Together this shows that $f_0$ is continuous at $x = 0$ with $f_0(0) = 0$ if (and only if) $cT > 1$. The bimodality and the location of modes follow from well-known properties of the Gamma distribution.
\end{proof}

\subsection{Existence of W-shaped implied volatility curves}

To show that the variance-gamma mixture model \eqref{eq:mixture} is able to produce (W+)-shaped implied volatility curves, we apply Proposition~\ref{prop:sufficient} and obtain the following:

\begin{lem}\label{lem:Wplus}
Given positive parameters $c, \lambda, \mu$ with $\mu < 2\lambda$ and $cT > 1$,  there exists $v_* > 0$, such that $f_v$ satisfies conditions (a) - (c) of Proposition~\ref{prop:sufficient} for all $v \in [0,v_*)$. Consequently, the implied volatility smile $K \mapsto \sigma_v(K)$ associated to $f_v$ must be $(W+)$-shaped for all $v \in [0,v_*)$.
\end{lem}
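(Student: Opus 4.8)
The plan is to verify the three hypotheses of Proposition~\ref{prop:sufficient} for $f_v$, using $v=0$ as a base case and then extending to a half-neighbourhood $[0,v_*)$ by a continuity-in-$v$ argument. Conditions (a) and (b) are the cheap part: for $v>0$ they are precisely the content of Lemma~\ref{lem:mixture}, while at $v=0$ the geometric symmetry \eqref{eq:geo_sym} passes to the pointwise limit $f_0$ (a functional identity is preserved under pointwise limits), and the explicit tail behaviour $f_0(x)\sim \mathrm{const}\cdot x^{cT-1}e^{-\lambda_+ x}$ as $x\to\infty$, read off from \eqref{eq:double_gamma}, gives $r_*=\lambda_+<\infty$. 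So the substantive work is the dip-at-zero condition \eqref{eq:dip}, $f_v(0)<\varphi_{\sigma_v(S_0)}(0)$.

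First I would check \eqref{eq:dip} at $v=0$ with room to spare. Since $cT>1$, the preceding lemma gives $f_0(0)=0$, whereas $\varphi_{\sigma_0(S_0)}(0)=\bigl(\sqrt{2\pi}\,\sigma_0(S_0)\bigr)^{-1}e^{-\sigma_0(S_0)^2/8}>0$ as soon as $\sigma_0(S_0)\in(0,\infty)$. The latter holds because $f_0$ is a genuine risk-neutral density — geometric symmetry forces $\int e^x f_0(x)\,dx=1$ — so the at-the-money call price $C_0(S_0)=S_0\,\E{(e^{X_{T,0}}-1)^+}$ satisfies $0<C_0(S_0)<S_0$, and hence \eqref{eq:IV} has a unique finite, strictly positive solution at $K=S_0$. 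Thus $f_0(0)=0<\varphi_{\sigma_0(S_0)}(0)$.

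Next I would transfer \eqref{eq:dip} to small $v>0$ by showing that both sides are continuous in $v$ at $v=0$. On the left this is exactly the pointwise convergence $f_v(0)\to f_0(0)=0$ from the preceding lemma. On the right it suffices to prove $\sigma_v(S_0)\to\sigma_0(S_0)$; since $\sigma\mapsto C_{BS}(S_0,\sigma)$ is a continuous strictly increasing bijection from $(0,\infty)$ onto $(0,S_0)$, its inverse is continuous, so it is enough that $C_v(S_0)\to C_0(S_0)$. The key observation is that by \eqref{eq:VG} and \eqref{eq:mixture} one has $X_{T,v}\to X_{T,0}$ almost surely (the Bernoulli weight $M$ and the Gamma time-changes $L^\pm$ do not depend on $v$), while $\E{e^{X_{T,v}}}=m_v(1)=m_v(0)=1$ for every $v\ge 0$; a Scheffé-type argument then upgrades the a.s.\ convergence to $e^{X_{T,v}}\to e^{X_{T,0}}$ in $L^1$, and since $z\mapsto (S_0 z-S_0)^+$ is $S_0$-Lipschitz we obtain $|C_v(S_0)-C_0(S_0)|\le S_0\,\E{|e^{X_{T,v}}-e^{X_{T,0}}|}\to 0$.

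Combining the three pieces: there is $v_*>0$ such that $f_v(0)<\varphi_{\sigma_v(S_0)}(0)$ for every $v\in[0,v_*)$, and together with (a) and (b) — valid for all $v\ge 0$ — Proposition~\ref{prop:sufficient} yields that $K\mapsto\sigma_v(K)$ is (W+)-shaped on this range. The main obstacle is the continuity of the at-the-money implied volatility at the degenerate point $v=0$: the almost-sure convergence in \eqref{eq:X0} by itself says nothing about densities or option prices, and one needs some uniform control to pass to the limit. That step, however, is made essentially free by the fixed normalisation $\E{e^{X_{T,v}}}\equiv 1$, which supplies exactly the uniform integrability required.
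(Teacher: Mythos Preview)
Your proof is correct and follows the same route as the paper's: conditions (a)--(b) are taken from Lemma~\ref{lem:mixture}, while (c) is first verified at $v=0$ using $f_0(0)=0$ (which needs $cT>1$) and then extended to a right neighbourhood $[0,v_*)$ by continuity of both sides in $v$. Your argument is in fact more detailed than the paper's, which simply asserts $\sigma_v(S_0)\to\sigma_0(S_0)$ ``by continuity'' without spelling out the Scheff\'e/$L^1$ step you provide.
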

\begin{proof}
The symmetry condition and the semi-heavy tails condition have been verified in Lemma~\ref{lem:mixture} above. For the dip-at-zero condition, note that by continuity it holds that 
\[\lim_{v \to 0}\sigma_v(S_0) = \sigma_0(S_0) \in (0,\infty),\]
where $\sigma_0$ is the implied volatility associated to the double-Gamma density $f_0$. Subtracting $f(0)$ from both sides of \eqref{eq:dip} and sending $v \to 0$ we obtain
\[\lim_{v \to 0} \left\{f_v(0) - \frac{\exp\left(-\sigma_v(S_0)^2/8\right)}{\sqrt{2\pi}\sigma_v(S_0)} \right\} = - \frac{\exp\left(-\sigma_0(S_0)^2/8\right)}{\sqrt{2\pi}\sigma_0(S_0)} < 0.\]
Therefore the dip-at-zero condition is satisfied for $v=0$, and, by continuity, also in a right neighborhood $[0,v_*)$ of $0$. The conclusion follows by applying Proposition~\ref{prop:sufficient}.
\end{proof}

In order to obtain a true W-shape it remains to show that for any $\sigma_* > 0$ the function $K \mapsto \sigma(K) - \sigma_*$ can have at most four sign changes. To this end, we apply the following result from \cite{glasserman2021w} (slightly reformulated):

\begin{lem}[Prop.~2.1 of \cite{glasserman2021w}] \label{lem:crossings}
Let $n_\text{vol}(\sigma_*)$ be the number of sign changes of $K \mapsto \sigma(K) - \sigma_*$, and let $n_\text{pdf}(\sigma_*)$ be the number of crossings of the densities $f$ and $\phi_{\sigma_*}$. If $f$ and $\phi_{\sigma_*}$ differ on an interval, then $n_\text{vol}(\sigma_*) \le 
n_\text{pdf}(\sigma_*) - 2$.
\end{lem}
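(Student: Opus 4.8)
The plan is to translate the statement into a variation-diminishing property for the difference of two call prices, and then prove that property by a Rolle-type argument that exploits the boundary behavior of call prices. First I would set $D(K) := C(K) - C_{BS}(K,\sigma_*)$ for $K \in (0,\infty)$. Since $C_{BS}(K,\cdot)$ is strictly increasing (its derivative, the Vega, is strictly positive) and $C(K) = C_{BS}(K,\sigma(K))$ by \eqref{eq:IV}, the sign of $D(K)$ agrees with the sign of $\sigma(K) - \sigma_*$, so $n_\text{vol}(\sigma_*)$ equals the number of strong sign changes of $D$. Next, by \eqref{eq:density} one has $C''(K) = g(K)$ and likewise $\partial_{KK} C_{BS}(K,\sigma_*) = g_{\sigma_*}(K)$, where $g$ and $g_{\sigma_*}$ denote the densities of $S_T$ under $f$ and under the Black-Scholes model with volatility $\sigma_*$. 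Using $g(s) = \tfrac1s f(\log(s/S_0))$ and the analogous identity relating $g_{\sigma_*}$ to $\phi_{\sigma_*}$, this gives $D''(K) = g(K) - g_{\sigma_*}(K) = \tfrac1K\bigl(f(\log(K/S_0)) - \phi_{\sigma_*}(\log(K/S_0))\bigr)$; since $K \mapsto \log(K/S_0)$ is an increasing bijection of $(0,\infty)$ onto $\RR$ and $1/K>0$, the number of sign changes of $D''$ equals $n_\text{pdf}(\sigma_*)$. It then remains only to show that $D$ has at least two fewer sign changes than $D''$.

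To that end I would record the boundary behavior of $D$. Since $f$, hence $g$, is a probability density with $\int_0^\infty s\,g(s)\,ds = S_0$ (a consequence of the normalization $\int e^x f(x)\,dx = 1$), the call price satisfies $C(0^+) = S_0$, $C(\infty) = 0$, $C'(K) = -\PP(S_T > K)$, hence $C'(0^+) = -1$ and $C'(\infty) = 0$; the Black-Scholes call price has exactly the same four limits. Therefore $D$ extends continuously to $[0,\infty]$ with $D(0) = D(\infty) = 0$ and $D'(0) = D'(\infty) = 0$, and $D \in C^2(0,\infty)$ because $g = C''$ is continuous. The hypothesis that $f$ and $\phi_{\sigma_*}$ differ on an interval is used exactly to guarantee $D \not\equiv 0$; together with $D(0^+) = 0$ this forces $D$ to be non-constant, so that $D' \not\equiv 0$ as well.

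The core step is an elementary real-analysis claim I would state and prove separately: if $h \in C^1(0,\infty)$ has vanishing limits at $0$ and at $\infty$ and $h \not\equiv 0$, then $h'$ has at least one more sign change than $h$. The argument: if $h$ has $n$ strong sign changes, choose interior points $t_0 < \dots < t_n$ realizing the alternating nonzero sign pattern of $h$, and between consecutive ones a zero $z_i \in (t_{i-1},t_i)$ of $h$ (intermediate value theorem); on each of the $n+1$ pairwise disjoint intervals $(0,z_1),(z_1,z_2),\dots,(z_{n-1},z_n),(z_n,\infty)$ the function $h$ vanishes (or has vanishing limit) at both endpoints yet takes a value of one fixed nonzero sign at an interior point (one of the $t_j$), so the mean value theorem produces on that interval both a point with $h'>0$ and a point with $h'<0$, hence a sign change of $h'$ inside it; as the intervals are disjoint, these are $n+1$ distinct sign changes of $h'$. (Running the same argument with an arbitrary $N$ in place of $n$ handles the case of infinitely many sign changes.) Applying the claim to $h = D$ shows $D'$ has at least $n_\text{vol}(\sigma_*)+1$ sign changes, and applying it again to $h = D'$ (legitimate since $D'$ vanishes at $0$ and $\infty$ and $D' \not\equiv 0$) shows $D''$ has at least $n_\text{vol}(\sigma_*)+2$ sign changes, i.e.\ $n_\text{vol}(\sigma_*) \le n_\text{pdf}(\sigma_*) - 2$.

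I expect the only genuine obstacles to be bookkeeping: verifying the four boundary limits of $C$ and $C'$ carefully (this is where integrability of $S_T$, equivalently $\int e^x f(x)\,dx = 1$, is used), and making the sign-change count in the elementary claim fully rigorous --- in particular checking that the $n+1$ sign changes of $h'$ are genuinely distinct and that the count is not disturbed by $h$ having additional zeros in the interiors of the intervals. Finally, the degenerate case $f \equiv \phi_{\sigma_*}$, where $n_\text{vol} = n_\text{pdf} = 0$ so that the asserted inequality would fail, is precisely what the hypothesis "differ on an interval" is there to exclude.
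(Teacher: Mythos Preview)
The paper does not supply a proof of this lemma; it is simply quoted as Proposition~2.1 of \cite{glasserman2021w} and used as a black box. Your argument is correct and self-contained: the reduction of sign changes of $\sigma(\cdot)-\sigma_*$ to sign changes of $D(K)=C(K)-C_{BS}(K,\sigma_*)$ via positivity of the Vega, the identification of sign changes of $D''$ with crossings of $f$ and $\phi_{\sigma_*}$ through \eqref{eq:density}, the four boundary limits $D(0^+)=D(\infty)=D'(0^+)=D'(\infty)=0$ (using integrability of $S_T$), and the two successive applications of the Rolle-type step (a nonzero $C^1$ function with vanishing limits at both ends of an interval has at least one more sign change in its derivative than in itself) are all sound. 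This is the natural variation-diminishing route to such inequalities and is, in substance, how the cited result is proved.
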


Thus, if the densities $f$ and $\phi_{\sigma_*}$ cross at most six times for any $\sigma_* > 0$, then the implied volatility curve associated to $f$ can be at most W-shaped. We can verify this property for the double gamma density:

\begin{lem}\label{lem:Gamma_cross}
For any choice of parameters, the double gamma density $f_0$, see \eqref{eq:double_gamma}, and the density of the normal distribution $\phi_\sigma$ cross at most six times.
\end{lem}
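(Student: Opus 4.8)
The plan is to reduce the question to a sign‑change count for a single explicit analytic function and then invoke a Descartes‑type / variation‑diminishing argument. On the half‑line $x>0$ the difference $f_0(x)-\phi_\sigma(x)$ equals $\tfrac{b}{a+b}\tfrac{\lambda_+^{cT}}{\Gamma(cT)} x^{cT-1} e^{-\lambda_+ x} - \tfrac{1}{\sqrt{2\pi}\sigma}\exp(-\tfrac12(\tfrac x\sigma+\tfrac\sigma2)^2)$, and on $x<0$ it is the mirror expression with $\lambda_-$; at $x=0$ (using $cT>1$, from the preceding lemma) both densities are finite, so the number of crossings on $\RR$ is at most the number of zeros on $(-\infty,0)$ plus the number on $(0,\infty)$ plus a possible crossing at $0$. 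The first step is therefore to show that on each open half‑line the equation $f_0=\phi_\sigma$ has at most a controlled number of solutions.

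On $(0,\infty)$, set $h_+(x) = \log\bigl(\tfrac{b}{a+b}\tfrac{\lambda_+^{cT}}{\Gamma(cT)}\bigr) + (cT-1)\log x - \lambda_+ x - \log\tfrac{1}{\sqrt{2\pi}\sigma} + \tfrac12(\tfrac x\sigma+\tfrac\sigma2)^2$; a crossing of the two densities is a zero of $h_+$. Now $h_+$ is, after clearing the logarithm, a smooth function whose second derivative is $h_+''(x) = -\,\tfrac{cT-1}{x^2} + \tfrac{1}{\sigma^2}$, which changes sign exactly once on $(0,\infty)$ (from negative to positive, at $x=\sigma\sqrt{cT-1}$, using $cT>1$). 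Hence $h_+$ is first concave, then convex, so $h_+'$ has at most two zeros, so $h_+$ has at most three zeros on $(0,\infty)$. By symmetry the same bound of three holds on $(-\infty,0)$. That already gives at most $3+3=6$ crossings away from the origin; it remains to check that adding a possible crossing at $x=0$ does not push the count to seven.

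The final step is to rule out the "boundary" case carefully. Since $f_0(0)=0$ while $\phi_\sigma(0)>0$, near $0^+$ we have $f_0<\phi_\sigma$ (as $x^{cT-1}\to0$) and likewise near $0^-$, so $x=0$ is \emph{not} a sign change of $f_0-\phi_\sigma$: the difference is negative on both sides of $0$. Consequently the crossings at $0$ contribute nothing, and the total number of crossings of $f_0$ and $\phi_\sigma$ on all of $\RR$ is at most $3+3=6$. (One may note additionally that because $f_0-\phi_\sigma$ is negative in a neighbourhood of $0$ and negative near $\pm\infty$ — the Gaussian tail being lighter on one side and the Gamma integrand vanishing at $+\infty$ faster only up to the quadratic exponent, which is handled by the concavity/convexity structure already established — the realized number of sign changes on each half‑line is in fact even smaller, but for the stated bound of six this refinement is not needed.)

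The main obstacle I anticipate is purely bookkeeping at the junction point $x=0$: one must be sure that the at‑most‑three count on each half‑line is genuinely for the open interval and that no crossing is lost or double‑counted at $0$, and one must confirm the sign of $cT-1$ enters correctly so that $h_+''$ has exactly one sign change (this is exactly where the hypothesis $cT>1$, inherited from Lemma on the double gamma density, is used). Everything else — differentiating $h_\pm$ twice, locating the single inflection, and concluding "concave‑then‑convex $\Rightarrow$ at most three zeros" via Rolle's theorem — is routine.
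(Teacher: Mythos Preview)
Your approach is the same as the paper's in structure: restrict to each open half-line, take the logarithm of the ratio $f_0/\phi_\sigma$, and bound its zeros. The paper writes this log-difference as $a_0 + a_1\log x + a_2 x + a_3 x^2$ and invokes the variation-diminishing property of the Descartes system $(1,\log x,x,x^2)$ on $(0,\infty)$ to get at most three zeros; you instead compute $h_+''(x)=-(cT-1)/x^2+1/\sigma^2$ and run Rolle. For this particular system the two arguments are equivalent, yours being the hands-on unpacking of the Descartes property.

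There is, however, a genuine gap: you import the hypothesis $cT>1$ ``from the preceding lemma'', but Lemma~\ref{lem:Gamma_cross} is stated \emph{for any choice of parameters}. Your convexity analysis of $h_+$ as written covers only $cT>1$; for $cT\le 1$ the coefficient $-(cT-1)$ is nonnegative, so $h_+''>0$ everywhere, $h_+$ is globally convex, and one gets at most two zeros per half-line --- easier, but it must be said. The paper's Descartes-system argument handles all cases uniformly (and indeed yields the Corollary that for $cT<1$ the sign-change count drops, ruling out W-shapes). Also, your parenthetical remark that $f_0-\phi_\sigma$ is ``negative near $\pm\infty$'' is backwards: the gamma tail $e^{-\lambda_\pm|x|}$ is heavier than the Gaussian tail, so the difference is \emph{positive} for large $|x|$. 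You correctly flag this as unneeded for the bound of six, so it does not damage the main argument, but it should be deleted or corrected.
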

\begin{proof}
We decompose the double gamma density as $f_0(x) = \tfrac{a}{a+b}f_{0,-}(-x) + \tfrac{b}{a+b}f_{0,+}(x)$, 
where $f_{0,+}(x)$ and $f_{0,-}(x)$ are $\Gamma(cT,\lambda_\pm)$-densities. By symmetry, it is clearly sufficient to show that $h_+(x) := \tfrac{b}{a+b} f_{0,+}(x)$ crosses $\phi_\sigma$ at most three times on $(0,\infty)$. Since both $h_+$ and $\phi_\sigma$ have full support on $(0,\infty)$ we can equivalently show that the difference $\Delta(x) = \log h_+(x) - \log \phi_\sigma(x)$ has at most three zeros. The function $\Delta(x)$ can be decomposed as
\[\Delta(x) = a_0 + a_1 \log(x) + a_2 x + a_3 x^2, \]
where 
\[a_0 = \log\left(e^{\sigma^2/8} \sqrt{2\pi} \sigma \lambda_+^{cT} / \Gamma(cT) \right), \quad a_1 = cT - 1, \quad a_2 = -\frac{\lambda}{\mu}, \quad a_3 = \frac{1}{2\sigma^2}.\] 
The functions $(1, \log(x), x, x^2)$ form a Descartes system on $(0,\infty)$, see e.g. \cite{gasull2012chebyshev}. By the variation-diminishing property of Descartes systems (see \cite{karlin1968total, borwein1995polynomials}), the number of sign changes of $\Delta$ is bounded by the number of sign changes of $(a_0,a_1,a_2,a_3)$. We conclude that $\Delta$ has at most three sign changes, and therefore the functions $h_+$ and $\phi_\sigma$ cross at most three times on $(0,\infty)$. Applying a symmetric argument to $h_-$, the claim follows.
\end{proof}

Note that the sequence $(a_0,a_1,a_2,a_3)$ can have three sign changes only if $cT > 1$. In the case $cT < 1$ the number of sign changes is bounded by two. This yields the following corollary:

\begin{cor}
If $cT < 1$, then the implied volatility curve $K \to \sigma_0(K)$ cannot be W- or (W+)-shaped.
\end{cor}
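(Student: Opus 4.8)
The plan is to re-run the crossing count from the proof of Lemma~\ref{lem:Gamma_cross} under the extra hypothesis $cT<1$, obtain a sharper bound on $n_\text{pdf}$, and then feed it into Lemma~\ref{lem:crossings}. Concretely: in that proof the number of crossings of $h_+=\tfrac{b}{a+b}f_{0,+}$ with $\phi_{\sigma_*}$ on $(0,\infty)$ is dominated, via the variation-diminishing property of the Descartes system $(1,\log x, x, x^2)$, by the number of sign changes of the coefficient vector $(a_0,a_1,a_2,a_3)$ with $a_1=cT-1$, $a_2=-\lambda/\mu$, $a_3=1/(2\sigma_*^2)$. When $cT<1$ we have $a_1<0$, $a_2<0$, $a_3>0$, so this vector has sign pattern $(\pm,-,-,+)$ and therefore at most two sign changes, regardless of the sign of $a_0$. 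Hence $h_+$ crosses $\phi_{\sigma_*}$ at most twice on $(0,\infty)$, and -- exactly as in Lemma~\ref{lem:Gamma_cross}, using that $f_0$ and $\phi_{\sigma_*}$ are both geometrically symmetric -- at most twice for $h_-$ on $(-\infty,0)$, giving $n_\text{pdf}(\sigma_*)\le 4$ for every $\sigma_*>0$.

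Next I would invoke Lemma~\ref{lem:crossings}: since $f_0$ (a double gamma density) and $\phi_{\sigma_*}$ (a lognormal log-price density) have different tails, they differ on an interval, so $n_\text{vol}(\sigma_*)\le n_\text{pdf}(\sigma_*)-2\le 2$ for all $\sigma_*>0$. A W-shaped smile, by definition, has exactly four sign changes of $K\mapsto\sigma(K)-\sigma_*$ for a suitable $\sigma_*$, and a $(W+)$-shaped smile at least four; both are impossible once $n_\text{vol}(\sigma_*)\le 2$ for every choice of $\sigma_*$. This yields the corollary.

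The one place that needs a little care is the passage from the one-sided count on $(0,\infty)$ to a two-sided count on all of $\RR$: one has to check that the origin does not contribute an extra sign change. For $cT<1$ this is automatic, since $f_0$ is unbounded at $0$ and hence $f_0>\phi_{\sigma_*}$ near $0$; more conceptually, the crossings of $f_0$ and $\phi_{\sigma_*}$ coincide with the zeros of the even function $x\mapsto e^{x/2}\bigl(f_0(x)-\phi_{\sigma_*}(x)\bigr)$, which cannot undergo a strong sign change at $0$, so the two-sided count is just twice the one-sided one. (One should also note that, although $f_0$ has an integrable singularity at the origin when $cT<1$, the call prices $C_0(K)$, the implied volatility $\sigma_0(K)$, and in particular the sign changes of $K\mapsto\sigma_0(K)-\sigma_*$ remain well defined, so the statement is meaningful.) Beyond this bookkeeping, the argument is a direct specialization of the machinery already assembled.
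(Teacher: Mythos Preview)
Your argument is correct and follows exactly the approach the paper indicates in the paragraph preceding the corollary: when $cT<1$ the coefficient $a_1=cT-1$ is negative, so $(a_0,a_1,a_2,a_3)$ has at most two sign changes, whence $n_\text{pdf}(\sigma_*)\le 4$ and Lemma~\ref{lem:crossings} gives $n_\text{vol}(\sigma_*)\le 2$. Your additional bookkeeping about the origin and about $f_0$ being singular but still yielding well-defined call prices is a welcome clarification that the paper leaves implicit.
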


\begin{center}
\includegraphics[width=0.75\textwidth]{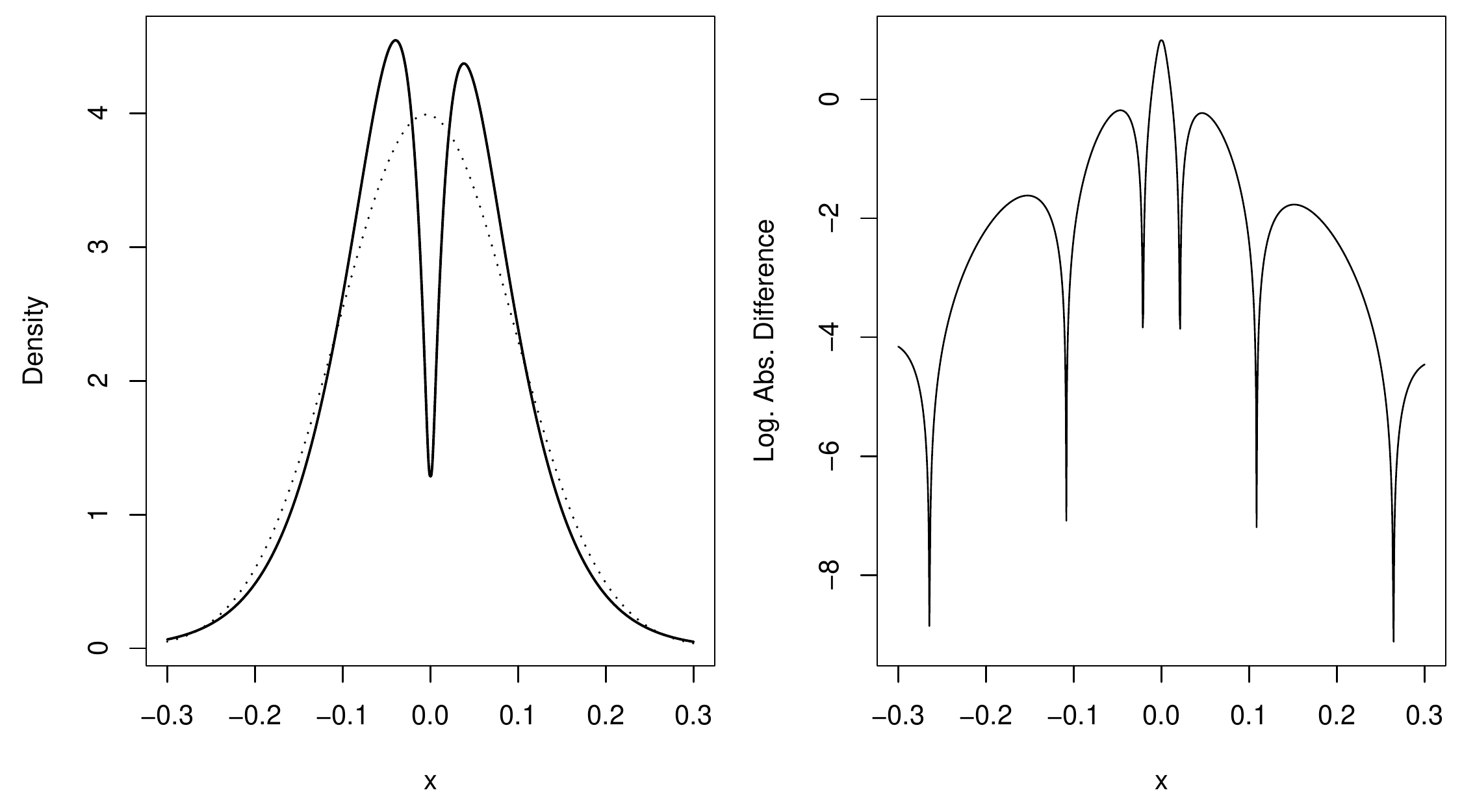}
\captionof{figure}{\textbf{double gamma vs. normal density.} The left panel shows a double gamma density (solid) with superimposed normal density (dotted). In the right panel the logarithm of the absolute difference of the densities is shown, indicating the six crossing points in line with Lemma~\ref{lem:Gamma_cross}.}
\end{center}

Combing the above results and using the auxilliary Lemma~\ref{lem:analytic}, we obtain our main result:

\begin{thm}
Given positive parameters $c, \lambda, \mu$ with $\mu < 2\lambda$ and $cT > 1$,  there exists $v_\circ > 0$, such that the implied volatility smile $K \mapsto \sigma_v(K)$ associated to $f_v$ is $W$-shaped for all $v \in [0,v_\circ)$.
\end{thm}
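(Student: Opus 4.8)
The plan is to combine the three lemmas just proved — Lemma~\ref{lem:Wplus} (the mixture is (W+)-shaped for $v \in [0,v_*)$), Lemma~\ref{lem:crossings} (the bound $n_\text{vol}(\sigma_*) \le n_\text{pdf}(\sigma_*) - 2$), and Lemma~\ref{lem:Gamma_cross} (the double gamma density $f_0$ crosses any normal density at most six times) — with the auxiliary Lemma~\ref{lem:analytic}, which provides a stronger (analytic) convergence of the densities $f_v \to f_0$ as $v \to 0$. The point is that Lemma~\ref{lem:Wplus} only gives a (W+)-shape, i.e. possibly $d \ge 4$ sign changes, whereas to get a true $W$-shape we must rule out $d \ge 6$, and for that we need to bound $n_\text{vol}(\sigma_*)$ uniformly in $\sigma_*$ for the \emph{perturbed} density $f_v$, not just for $f_0$.

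First I would handle the limiting case $v = 0$ directly: by Lemma~\ref{lem:Gamma_cross}, $f_0$ and $\phi_{\sigma_*}$ cross at most six times for every $\sigma_* > 0$, so by Lemma~\ref{lem:crossings} we get $n_\text{vol}^{(0)}(\sigma_*) \le 4$ for every $\sigma_*$; combined with Lemma~\ref{lem:Wplus} (applied at $v=0$, using $cT>1$) this already shows $\sigma_0$ is exactly $W$-shaped. Next, I would transfer the crossing bound from $v = 0$ to small $v > 0$. The natural route is to use the analytic continuation of the densities provided by Lemma~\ref{lem:analytic}: write $\Delta_v(x) = \log f_v(x) - \log \phi_{\sigma_*}(x)$ (or the difference $f_v - \phi_{\sigma_*}$ itself), and argue that because $f_v \to f_0$ in a strong enough sense (locally uniformly together with derivatives, and with uniform control of the tails from the semi-heavy-tail estimate), the number of zeros of $\Delta_v$ cannot jump up from the number of zeros of $\Delta_0$ as $v$ decreases to $0$ — at least once $v$ is small enough, and uniformly over $\sigma_*$ ranging in the relevant compact set of candidate values. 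Since the behaviour of $\sigma_v(K)$ as $K \to 0$ and $K \to \infty$ (both limits equal to $+\infty$, from Lee's moment formula via Lemma~\ref{lem:mixture}) forces $\sigma_*$ to lie in a bounded interval bounded away from $0$ for any sign change to occur, one only needs this uniformity on a compact $\sigma_*$-range.

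The main obstacle is precisely this last transfer step: a pointwise (or even locally uniform) limit of densities does \emph{not} in general preserve the number of crossings — crossings can be created in the tails or born out of a tangency as $v$ moves away from $0$. This is exactly why the excerpt flags that "even the almost sure convergence of random variables in \eqref{eq:X0} does not guarantee the pointwise convergence of densities", and why a \emph{stronger} convergence result (Lemma~\ref{lem:analytic}) was isolated. So the real content is: (i) use analyticity of $x \mapsto f_v(x)$ (the Bessel-function representation \eqref{eq:VG_density} is real-analytic off the origin) together with Lemma~\ref{lem:analytic} to say $\Delta_v \to \Delta_0$ in a topology that controls zero counting on compacts (e.g. via Hurwitz-type or Rolle-type arguments), and (ii) control the tails: show that for $x$ large (in absolute value) and $v$ small, $\log f_v(x)$ is, like $\log f_0(x)$, eventually convex/concave or otherwise sign-definite relative to the parabola $\log \phi_{\sigma_*}(x)$, so no extra crossings hide near $\pm\infty$; here the uniform semi-heavy-tail bound and the explicit asymptotics $e^{-z}K_\nu(z) \sim \sqrt{\pi/2z}$ used in the proof of the double-gamma limit do the job, giving $\log f_v(x) = \beta_{\pm,v} x + (cT - 1/2 - 1/2)\log|x| \mp \alpha_v |x| + O(1)$ uniformly for small $v$. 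Putting (i) and (ii) together yields $n_\text{pdf}^{(v)}(\sigma_*) \le 6$ for all $\sigma_* > 0$ and all $v \in [0, v_\circ)$ for some $v_\circ \le v_*$, hence $n_\text{vol}^{(v)}(\sigma_*) \le 4$ by Lemma~\ref{lem:crossings}. Combined with the already-established (W+)-shape from Lemma~\ref{lem:Wplus}, the sign sequence must be exactly $\pp\mm\pp\mm\pp$ with four sign changes, i.e. the smile is $W$-shaped, for all $v \in [0,v_\circ)$.
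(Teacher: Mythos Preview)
Your plan is correct and matches the paper's approach: the paper likewise combines Lemmas~\ref{lem:Wplus}, \ref{lem:crossings} and \ref{lem:Gamma_cross}, then uses the analytic extension from Lemma~\ref{lem:analytic} and applies Hurwitz's theorem to $\delta_v = f_v - \phi_\sigma$ on open neighborhoods $\cO_\pm$ of the real half-axes inside $H_\pm$ to transfer the six-crossing bound from $v=0$ to small $v>0$. Your flagged concerns about tail control and uniformity in $\sigma_*$ are well-spotted technical refinements that the paper's proof treats only implicitly.
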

\begin{proof}By Lemma~\ref{lem:Wplus}, we know that there exists $v_*$ such that $f_v$ is (W+)-shaped for all $v \in [0, v_*)$. That is, it remains to show that there is $v_\circ \in (0,v_*]$, such that  $K \mapsto \sigma_v(K) - \sigma_*$ can not have \emph{more} than four zeroes for any $\sigma_* \in (0,\infty)$ and $v \in [0,v_\circ)$. By Lemma~\ref{lem:crossings} this follows, if we can show that the densities $f_v$ and $\phi_\sigma$ cross at most six times for any $\sigma$ and for $v \in [0,v_\circ)$. By Lemma~\ref{lem:Gamma_cross}, this statement is true in the boundary case $v=0$, and it remains to show that we can extend it to a right neighborhood $[0,v_\circ)$ of zero. To this end we apply a complex analytic argument. Consider, for arbitrary $\sigma > 0$, the difference
\[\delta_v(x) = f_v(x) - \phi_\sigma(x), \qquad x \in \RR.\]
By Lemma~\ref{lem:analytic}, $\delta_v$ can be analytically extended to the right complex half plane $H_+ = \set{u \in \CC: \Re\,u > 0}$ and to the left complex half plane $H_- = \set{u \in \CC: \Re\,u < 0}$, for any $v \ge 0$. Moreover, $\delta_0$ has three zeros on the positive real axis in $H_+$ and three on the negative real axis in $H_-$. Since zeroes of an analytic function are isolated, we can find open subsets $\cO_+ \subset H_+$ and $\cO_- \subset H_-$, containing the real half axes and the zeroes thereon, but no other (complex) zeroes of $\delta_0$. Also by Lemma~\ref{lem:analytic}, $\delta_v$ converges uniformly on compacts to $\delta_0$ as $v \to 0$ on both $\cO_+$ and $\cO_-$. We can now apply Hurwitz' theorem \cite[Sec.~IV.3]{markushevich2005theory} to conclude that there exists $v_\circ$ such that $\delta_v$ has the same number of zeros in $\cO_+$ (and $\cO_-$) as $\delta_0$ for all $v \in [0,v_\circ)$. In particular, this bounds the number of real zeroes of $\delta_v$ by six; together with Lemma~\ref{lem:crossings} this completes the proof.
\end{proof}

\begin{center}
\includegraphics[width=0.75\textwidth]{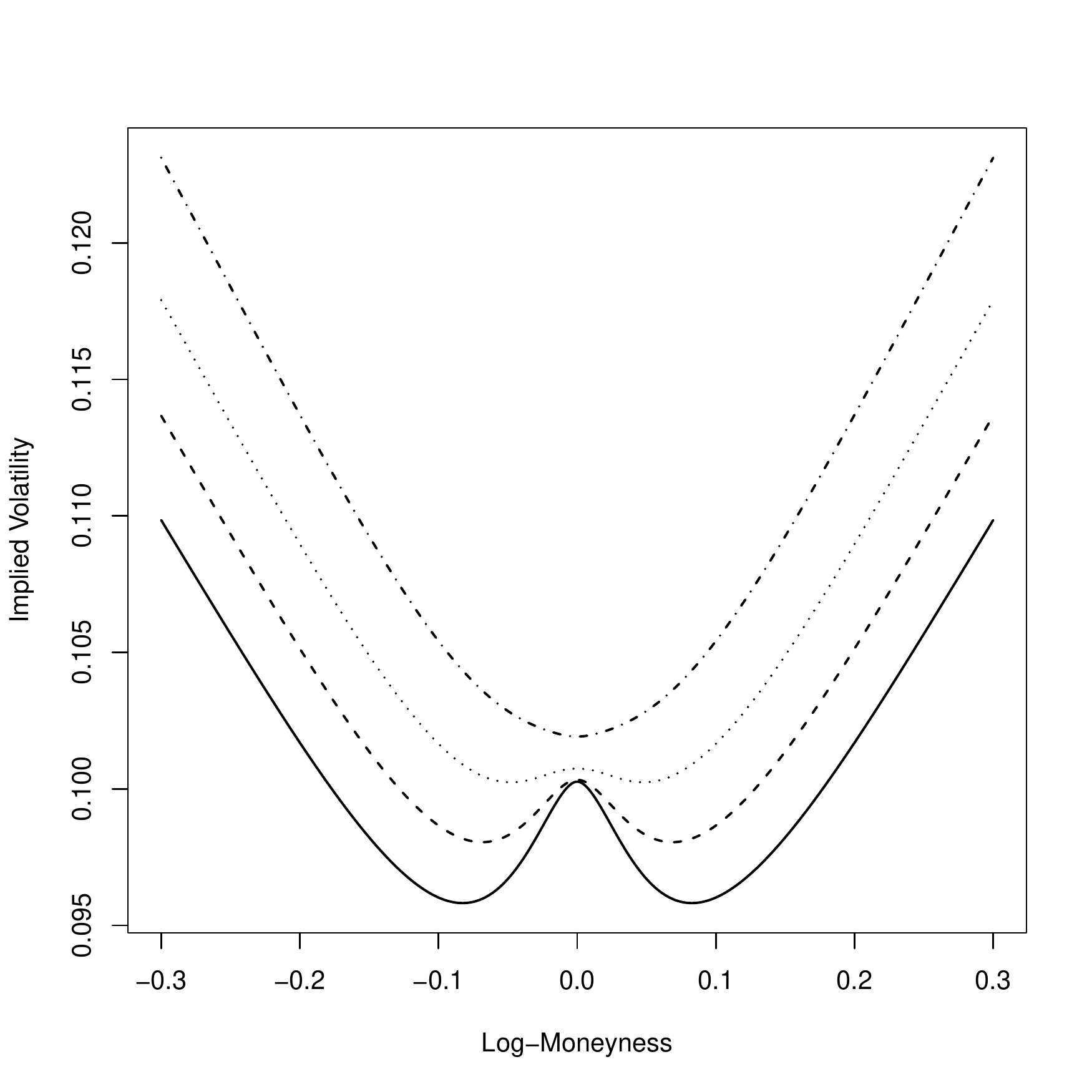}
\captionof{figure}{\textbf{W-shaped implied volatility curves in the variance-gamma mixture model.} This plot shows implied volatility curves for the variance-gamma mixture model with varying parameter $v = (0, 0.01, 0.015, 0.02)$, ordered from bottom to top. The volatility curves are in one-to-one correspondence with the densities shown in Fig.~\ref{fig:density}. The solid line corresponds to $v=0$, i.e., the double gamma model.}
\label{fig:IV}
\end{center}

\subsection{Option pricing and numerical examples}

The following formulas for put and call prices can be easily derived from \eqref{eq:call} and the explicit form of the densities $f_v$ in \eqref{eq:VG_density}, \eqref{eq:mix_dens}, and \eqref{eq:double_gamma}.
\begin{lem}The call price in the variance-gamma mixture model \eqref{eq:mixture} is given by
\begin{equation}\label{eq:call_VG}
C(K) = S_0 Q(\log(S_0/K)) - K \bar Q(\log(K/S_0)),
\end{equation}
and the put price by
\begin{equation}\label{eq:put_VG}
P(K) = K Q(\log(K/S_0)) - S_0 \bar Q(\log(S_0/K)),
\end{equation}
where 
\[Q(x) = \tfrac{a}{a+b} F_{VG}(x; \alpha,\beta_-,cT) + \tfrac{b}{a+b} F_{VG}(x; \alpha,\beta_+,cT),\]
with $F_{VG}$ the probability distribution function of the variance-gamma distribution in $(\alpha,\beta,c)$-parameterization, see \eqref{eq:VG_density}, and $\bar Q = 1 - Q$. In the limiting case $v = 0$, the formulas \eqref{eq:call_VG} and \eqref{eq:put_VG} holds with $Q$ given by
\[Q(x) = \tfrac{a}{a+b} \left(1 - \Gamma(-x; \lambda_-,cT)\right) + \tfrac{b}{a+b} \Gamma(x; \lambda_+,cT),\]
where $\Gamma$ is the probability distribution function of the Gamma distribution, extended by zero to $x < 0$.
\end{lem}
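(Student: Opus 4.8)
The plan is to obtain the call price directly from its definition as a risk-neutral expectation and then to read off the put price from put--call parity; the single non-mechanical ingredient is the geometric symmetry of $f$, which turns the ``asset-side'' term of the standard call decomposition into a probability. Writing $k = \log(K/S_0)$ and using $S_T = S_0 e^{X_T}$, so that $(S_T - K)^+ = (S_0 e^{X_T} - K)\ind{X_T > k}$, we get
\[
C(K) = \E{(S_T - K)^+} = S_0\,\E{e^{X_T}\ind{X_T > k}} - K\,\PP(X_T > k),
\]
which is legitimate because $\E{e^{X_T}} = 1$ by risk-neutrality. Once the distribution function of $X_T$ is identified, the second term is simply $K\big(1 - \PP(X_T \le k)\big) = K\,\bar Q(\log(K/S_0))$.

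For the asset-side term, multiplying the geometric symmetry relation \eqref{eq:geo_sym} through by $e^{x/2}$ gives the identity $e^x f(x) = f(-x)$, and hence
\[
\E{e^{X_T}\ind{X_T > k}} = \int_k^\infty e^x f(x)\,dx = \int_k^\infty f(-x)\,dx = \int_{-\infty}^{-k} f(y)\,dy = \PP(X_T \le -k) = \PP(X_T \le \log(S_0/K)).
\]
It remains to identify $x \mapsto \PP(X_T \le x)$. By the mixture representation \eqref{eq:mix_dens}, $\PP(X_T \le x) = \tfrac{a}{a+b}F_-(x) + \tfrac{b}{a+b}F_+(x)$, where $F_\pm$ is the distribution function of the density $f_\pm$ from \eqref{eq:VG_density}; since $f_\pm$ is precisely the variance-gamma density in $(\alpha,\beta_\pm,cT)$-parameterization, $F_\pm(x) = F_{VG}(x;\alpha,\beta_\pm,cT)$, so that $\PP(X_T \le x) = Q(x)$. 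Substituting the two displays yields $C(K) = S_0 Q(\log(S_0/K)) - K\,\bar Q(\log(K/S_0))$, which is \eqref{eq:call_VG}. The put formula then follows from put--call parity $C(K) - P(K) = S_0 - K$ (a consequence of $\E{S_T} = S_0$) together with $Q(\cdot) + \bar Q(\cdot) \equiv 1$: writing $P(K) = C(K) - S_0 + K$ and regrouping the four terms gives \eqref{eq:put_VG}. Alternatively one repeats the computation starting from $P(K) = \E{(K - S_T)^+} = K\,\PP(X_T < k) - S_0\,\E{e^{X_T}\ind{X_T < k}}$ and using $e^x f(x) = f(-x)$ exactly as above.

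For the limiting case $v = 0$ the derivation is unchanged; only $Q$ has to be recomputed. By \eqref{eq:X0} and the scaling law of the Gamma distribution (equivalently, by integrating the double gamma density \eqref{eq:double_gamma}), $X_{T,0}^+ \sim \Gamma(cT,\lambda_+)$ and $-X_{T,0}^- \sim \Gamma(cT,\lambda_-)$, so $F_+(x) = \Gamma(x;\lambda_+,cT)$ and $F_-(x) = 1 - \Gamma(-x;\lambda_-,cT)$, with the convention that $\Gamma(\,\cdot\,;\lambda,cT)$ is extended by zero to $x < 0$; the stated $Q$ is then just the mixture $\tfrac{a}{a+b}F_- + \tfrac{b}{a+b}F_+$. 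There is no genuine obstacle in this proof, which is a short direct computation. The two points that require care are the passage from the asset-side expectation to a probability -- this really uses the geometric symmetry and would fail for a density violating \eqref{eq:geo_sym} -- and keeping the arguments $\log(S_0/K)$ and $\log(K/S_0)$ consistently distinct throughout the bookkeeping.
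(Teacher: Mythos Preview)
Your argument is correct and is exactly the kind of short direct computation the paper has in mind when it says these formulas are ``easily derived'' from \eqref{eq:call} together with the explicit densities \eqref{eq:VG_density}, \eqref{eq:mix_dens}, \eqref{eq:double_gamma}. In particular, your use of the geometric symmetry (equivalently, the relations \eqref{eq:ab_sym} established in Lemma~\ref{lem:mixture}) to turn the asset-side integral $\int_k^\infty e^x f(x)\,dx$ into the probability $Q(-k)$ is precisely what makes both terms of \eqref{eq:call_VG} expressible through the same function $Q$, and is the one non-bookkeeping step the paper is tacitly invoking.
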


Using the package \textsf{VarianceGamma} \cite{scott2018variance} we have implemented the above formulas in the statistical computing environment \textsf{R} \cite{R2019} and plotted the implied volatility curves shown in Figure~\ref{fig:IV}. The plots nicely illustrate the symmetry, the obtainable W-shapes, and the dependence on $v$ in the variance gamma mixture model \eqref{eq:mixture}.


\small

\bibliographystyle{alpha}
\bibliography{references}


\appendix
\section{Lemma on uniform convergence of densities}
\begin{lem}\label{lem:analytic}The density $f_v(x)$, see \eqref{eq:mixture}, of the variance-gamma mixture model can be analytically extended to the right complex half plane $H_+ = \set{u \in \CC: \Re\,u > 0}$ and to the left complex half plane $H_- = \set{u \in \CC: \Re\,u < 0}$; the same is true for the double gamma density $f_0(x)$. On both $H_+$ and $H_-$, the extended densities $f_v$ converge uniformly on compacts to $f_0$ as $v \to 0$.
\end{lem}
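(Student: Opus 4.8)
The plan is to produce holomorphic continuations directly from the Bessel representation \eqref{eq:VG_density} and the double-gamma formula \eqref{eq:double_gamma}, and then to deduce the convergence by substituting the large-argument asymptotics of $K_\nu$ while tracking the parameter limits of \eqref{eq:alpha_beta} with enough uniformity.

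First I would set up the continuations. For $x>0$ we have $f_v(x)=\tfrac{a}{a+b}f_{-,v}(x)+\tfrac{b}{a+b}f_{+,v}(x)$ with $f_{\pm,v}(x)=\gamma_{\pm,v}(x/\alpha_v)^{cT-1/2}e^{\beta_{\pm,v}x}K_{cT-1/2}(\alpha_v x)$, where $a,b$ do not depend on $v$. Since $z\mapsto z^{cT-1/2}$ (principal branch) and $z\mapsto K_{cT-1/2}(z)$ are holomorphic on $\CC\setminus(-\infty,0]\supset H_+$ and $z\mapsto e^{\beta_{\pm,v}z}$ is entire, the right-hand side defines a holomorphic function on $H_+$; this is the claimed extension of $f_v$. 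On $H_+$ the double-gamma density coincides with its second term $\tfrac{b}{a+b}\tfrac{\lambda_+^{cT}}{\Gamma(cT)}x^{cT-1}e^{-\lambda_+x}$, which continues to $H_+$ for the same reason. The continuations to $H_-$ come either from the $x<0$ branch of the formulas (there $|x|=-x$ and $-z\in H_+$), or, more economically, from the geometric symmetry: $f_v$ satisfies $f_v(-x)=e^x f_v(x)$ by Lemma~\ref{lem:mixture}, and $f_0$ satisfies the same relation by a direct check on \eqref{eq:double_gamma} (it reduces to $\lambda_\pm\mp\tfrac12=\lambda/\mu$ and $a\lambda_-^{cT}=b\lambda_+^{cT}$, the latter from \eqref{eq:ab}); hence the $H_-$-continuation is $u\mapsto e^{-u}$ times the $H_+$-continuation at $-u$.

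Next I would pass to the limit on a fixed compact $\mathcal{K}\subset H_+$. Using $K_\nu(w)=\sqrt{\pi/(2w)}\,e^{-w}\bigl(1+R_\nu(w)\bigr)$ with $|R_\nu(w)|\le C_\nu/|w|$ uniform on every sector $|\arg w|\le\pi-\delta$ \cite{abramowitz1964handbook}, one rewrites
\[f_{\pm,v}(z)=\Bigl(\gamma_{\pm,v}\sqrt{\tfrac{\pi}{2}}\,\alpha_v^{-cT}\Bigr)\,z^{cT-1}\,e^{-(\alpha_v-\beta_{\pm,v})z}\bigl(1+R_{cT-1/2}(\alpha_v z)\bigr),\]
so that everything reduces to elementary scalar limits. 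From $\alpha_v=\mu/v^2+\lambda/\mu+O(v^2)$ (cf. \eqref{eq:alpha_beta}) one gets $\gamma_{\pm,v}\sqrt{\pi/2}\,\alpha_v^{-cT}=\Gamma(cT)^{-1}\bigl((\lambda\pm\mu/2)/(v^2\alpha_v)\bigr)^{cT}\to\lambda_\pm^{cT}/\Gamma(cT)$, together with $\alpha_v-\beta_{+,v}\to\lambda_+$ and $\alpha_v-\beta_{-,v}\to+\infty$. On $\mathcal{K}$ one has $\Re z\ge c_0>0$ and $|z|\le C$, so $|\alpha_v z|\to\infty$ uniformly (killing $R_{cT-1/2}(\alpha_v z)$, whose argument $\arg(\alpha_v z)=\arg z$ stays in a compact subinterval of $(-\tfrac\pi2,\tfrac\pi2)$), while $z^{cT-1}$ and $e^{-(\alpha_v-\beta_{+,v})z}$ converge uniformly and $|e^{-(\alpha_v-\beta_{-,v})z}|\le e^{-(\mu/v^2)c_0}\to0$. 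Hence $f_{+,v}\to\tfrac{\lambda_+^{cT}}{\Gamma(cT)}z^{cT-1}e^{-\lambda_+z}$ and $f_{-,v}\to0$ uniformly on $\mathcal{K}$, so $f_v\to f_0$ uniformly on $\mathcal{K}$. The statement on $H_-$ then follows from the symmetry identity above, since post-composing with the fixed continuous maps $u\mapsto-u$ and $u\mapsto e^{-u}$ preserves locally uniform convergence.

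The hard part will be the uniformity in the second step: the Bessel remainder $R_{cT-1/2}(\alpha_v z)$ must be controlled uniformly as $v\to0$ (so $\alpha_v\to\infty$) and simultaneously $z$ ranges over $\mathcal{K}$, which is exactly why one needs the $K_\nu$-asymptotics with an error bound that is uniform on sectors, not just a pointwise statement. Everything else — the branch bookkeeping for $z^{cT-1/2}$ on the half-planes and the scalar limits of $\alpha_v,\beta_{\pm,v},\gamma_{\pm,v}$ — is routine.
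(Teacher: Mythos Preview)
Your argument is correct and follows the same overall strategy as the paper --- analytic continuation via the explicit Bessel representation, then convergence via the large-argument behaviour of $K_\nu$ combined with the elementary parameter limits \eqref{eq:alpha_beta} --- but the execution differs in two places worth noting. First, for the key step ``$K_\nu(\alpha_v z)$ behaves like $\sqrt{\pi/(2\alpha_v z)}\,e^{-\alpha_v z}$ uniformly on compacts of $H_+$ as $\alpha_v\to\infty$'', you invoke the off-the-shelf asymptotic expansion with a sector-uniform remainder bound $|R_\nu(w)|\le C_\nu/|w|$, whereas the paper rederives this convergence by hand from the integral representation $K_\nu(z)=\int_0^\infty e^{-z\cosh t}\cosh(\nu t)\,dt$, writing $f_{v,+}=A_v\cdot B_{\alpha_v}$ with $B_\alpha(u)=\sqrt{2\alpha u/\pi}\,e^{\alpha u}K_{cT-1/2}(\alpha u)$ and showing $B_\alpha\to 1$ via a substitution and dominated convergence. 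Your route is shorter and cleaner, at the cost of importing a slightly sharper statement than what Abramowitz--Stegun 9.7.2 makes explicit (the sector-uniform error bound is standard, e.g.\ in Olver or NIST DLMF 10.40). Second, to handle $H_-$ you exploit the geometric symmetry $f_v(-u)=e^{u}f_v(u)$ (verified for $f_0$ directly), which immediately transports both the analytic continuation and the locally uniform convergence from $H_+$ to $H_-$; the paper instead repeats the $H_+$ argument with ``slight modifications'', separately checking that $f_{v,-}\to f_{0,-}$ on $H_-$ and that $f_{v,\pm}\to 0$ on the opposite half-plane. Your symmetry shortcut is a genuine simplification.
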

\begin{proof} By \cite[9.6.1]{abramowitz1964handbook} the modified Bessel function $K_\nu$ is analytic in the complex plane cut along the negative real axis, from which the claim on analytic extension follows in conjunction with \eqref{eq:VG_density}. To show the compact convergence, we decompose $f_v$ as in \eqref{eq:mix_dens}, and first show the compact convergence of $f_{v,+}$ to the (analytically extended) Gamma density $f_{0,+}$ on $H_+$. We write $f_{v,+}(u) = A_v(u) \cdot B_{\alpha_v}(u)$, where
\[A_v(u) = \frac{(\lambda  + \tfrac{\mu}{2})^{cT}}{\Gamma(cT)} \left(\frac{u}{\alpha_v v}\right)^{cT - 1} e^{(\beta_+ - \alpha_v)u} \qquad \text{and} \qquad B_\alpha(u) = \sqrt{\frac{2 \alpha u}{\pi}} e^{\alpha u}K_{ct - 1/2}(\alpha u).\]
Using standard arguments, together with \eqref{eq:VG_par} and \eqref{eq:alpha_beta}, it is easy to see that
\[A_v(u) \to A_0(v) := \frac{\lambda_+^{cT}}{\Gamma(cT)} u^{cT-1} e^{-\lambda_+ u} \qquad \text{uniformly on compact subsets of $H_+$.} \]
It remains to show compact convergence of $B_{\alpha_v}$ to $1$ on $H_+$ as $\alpha_v \to \infty$.
To this end we use the integral representation \cite[9.6.24]{abramowitz1964handbook} of $K_\nu$ and the half-angle formula for hyperbolic functions, to write $B_\alpha(u)$ as
\[B_\alpha(u) = \sqrt{\frac{2 \alpha u}{\pi}} \int_0^\infty e^{-2\alpha u \sinh(t/2)^2} \cosh(\nu t) dt, \]
where $\nu = cT - 1/2$. For $\nu = 1/2$ we have $\sqrt{\tfrac{2 \alpha u}{\pi}} e^{\alpha u} K_{1/2}(\alpha u) = 1$, which allows us to write
\begin{equation}
B_\alpha(u)  - 1 = \sqrt{\frac{2 \alpha u}{\pi}} \int_0^\infty e^{-2\alpha u \sinh(t/2)^2} \left(\cosh(\nu t)  - \cosh(t/2)\right) dt.
\end{equation}
Let now $K$ be a compact subset of $H_+$. Clearly, there are constants $C_i > 0$, such that $|u|/\Re\,u  \le C_1$, $1/\Re\,u \le C_2$, and $\Re\,u \le C_3$ for all $u \in K$. We estimate
\begin{align}
\left|B_\alpha(u)  - 1\right| &\le C_1  \sqrt{\frac{2 \alpha \Re\,u}{\pi}} \int_0^\infty e^{-2\alpha \Re\,u \sinh(t/2)^2} \left|\cosh(\nu t)  - \cosh(t/2)\right| dt. \notag \\
& = C_1 \sqrt{\frac{2}{\pi}}\int_0^\infty e^{-y^2/2} \left| h\left(y/(2\sqrt{\alpha \Re\,u}\right) - 1\right| dy,
\label{eq:B_diff}
\end{align}
where
\[h(z) = \frac{\cosh(2 \nu \arsinh(z)}{\cosh(\arsinh(z))}.\]
Using $\arsinh(z) = \log(z + \sqrt{z^2 + 1})$, we find that $\log(z+1) \le \arsinh(z) \le \log(2z+1)$ and estimate
\begin{align*}
\sup_{u \in K} \left| h\left(y/(2\sqrt{\alpha \Re\,u}\right) - 1\right| &\le \left(\sqrt{\tfrac{C_2}{\alpha}}y + 1\right)^{2 \nu} - 1\\
\sup_{u \in K} \left| h\left(y/(2\sqrt{\alpha \Re\,u}\right) - 1\right| &\ge \left(\tfrac{1}{\sqrt{C_3 \alpha}}y + 1\right)^{-1} - 1
\end{align*}
Sending $\alpha \to \infty$, both bounds tend to zero; applying dominated convergence to the right side of \eqref{eq:B_diff} now shows that 
\[\lim_{\alpha \to \infty} \sup_{u \in K} \left|B_\alpha(u)  - 1\right| = 1,\]
i.e., $B_\alpha$ converges compactly to 1 on $H_+$, and hence also $f_{v,+}$ to $f_{0,+}$. Slight modifications of the above argument show that $f_{v,-}$ compactly converges to $f_{0,-}$ on $H_-$, and that both $f_{v,+}$ and $f_{v,-}$ compactly converge to zero on $H_-$ and $H_+$ respectively. Together with \eqref{eq:mix_dens}, the proof is complete.
\end{proof}

\end{document}